\newcommand{\ket}[1]{\left|#1\right\rangle}
\newcommand{\bra}[1]{\left\langle#1\right|}
\theoremstyle{plain}
\newtheorem{thm}{Theorem}
\theoremstyle{definition}
\newtheorem{defn}{Definition} 
\newcounter{eqn}
\renewcommand*{\theeqn}{\alph{eqn})}
\newcommand{\num}{\refstepcounter{eqn}\text{\theeqn}\;}
\newcommand{\putindeepbox}[2][0.7\baselineskip]{{%
    \setbox0=\hbox{#2}%
    \setbox0=\vbox{\noindent\hsize=\wd0\unhbox0}
    \@tempdima=\dp0
    \advance\@tempdima by \ht0
    \advance\@tempdima by -#1\relax
    \dp0=\@tempdima
    \ht0=#1\relax
    \box0
}}
\begin{document}



\title{Quantum Inference on Bayesian Networks}


\author{Guang Hao Low, Theodore J. Yoder, Isaac L. Chuang}
\address{Massachusetts Institute of Technology, 77 Massachusetts Avenue, Cambridge, 02139 MA, United States of America}
\date{\today}

\pacs{02.50.Tt, 03.67.Ac}

\begin{abstract}
Performing exact inference on Bayesian networks is known to be \#P-hard. Typically approximate inference techniques are used instead to sample from the distribution on query variables given the values $e$ of evidence variables. Classically, a single unbiased sample is obtained from a Bayesian network on $n$ variables with at most $m$ parents per node in time $\mathcal{O}(nmP(e)^{-1})$, depending critically on $P(e)$, the probability the evidence might occur in the first place. By implementing a quantum version of rejection sampling, we obtain a square-root speedup, taking $\mathcal{O}(n2^mP(e)^{-\frac12})$ time per sample. We exploit the Bayesian network's graph structure to efficiently construct a quantum state, a q-sample, representing the intended classical distribution, and also to efficiently apply amplitude amplification, the source of our speedup. Thus, our speedup is notable as it is unrelativized -- we count primitive operations and require no blackbox oracle queries.
\end{abstract}

\maketitle

\section{Introduction}
\label{Introduction}

How are rational decisions made? Given a set of possible actions, the logical answer is the one with the largest corresponding utility. However, estimating these utilities accurately is the problem. A rational agent endowed with a model and partial information of the world must be able to evaluate the probabilities of various outcomes, and such is often done through inference on a Bayesian network \cite{[Russell2003]}, which efficiently encodes joint probability distributions in a directed acyclic graph of conditional probability tables. In fact, the standard model of a decision-making agent in a probabilistic time-discretized world, known as a Partially Observable Markov Decision Process, is a special case of a Bayesian network. Furthermore, Bayesian inference finds application in processes as diverse as system modeling \cite{[Bensi2013]}, model learning \cite{[Neapolitan2004],[Cooper1992]}, data analysis \cite{[Friedman1999]}, and decision making \cite{[Jensen2001]}, all falling under the umbrella of machine learning \cite{[Russell2003]}.

Unfortunately, despite the vast space of applications, Bayesian inference is difficult. To begin with, exact inference is $\#P$ hard in general \cite{[Russell2003]}. It is often far more feasible to perform approximate inference by sampling, such as with the Metropolis-Hastings algorithm \cite{[Metropolis1953]} and its innumerable specializations \cite{[Chib1995]}, but doing so is still NP-hard in general\cite{[Dagum1993]}. This can be understood by considering rejection sampling, a primitive operation common to many approximate algorithms that generates unbiased samples from a target distribution $P(\mathcal{Q}|\mathcal{E})$ for some set of query variables $\mathcal{Q}$ conditional on some assignment of evidence variables $\mathcal{E}=e$. In the general case, rejection sampling requires sampling from the full joint $P(\mathcal{Q},\mathcal{E})$ and throwing away samples with incorrect evidence. In the specific case in which the joint distribution is described by a Bayesian network with $n$ nodes each with no more than $m$ parents, it takes time $\mathcal{O}(nm)$ to generate a sample from the joint distribution, and so a sample from the conditional distribution $P(\mathcal{Q}|\mathcal{E})$ takes average time $\mathcal{O}(n m P(e)^{-1})$. Much of the computational difficulty is related to how the marginal $P(e)= P(\mathcal{E}=e)$ becomes exponentially small as the number of evidence variables increases, since only samples with the correct evidence assignments are recorded.

One very intriguing direction for speeding up approximate inference is in developing hardware implementations of sampling algorithms, for which promising results such as natively probabilistic computing with stochastic logic gates have been reported \cite{[Mansinghka2009]}. In this same vein, we could also consider physical systems that already describe probabilities and their evolution in a natural fashion to discover whether such systems would offer similar benefits.

Quantum mechanics can in fact describe such naturally probabilistic systems. Consider an analogy: if a quantum state is like a classical probability distribution, then measuring it should be analogous to sampling, and unitary operators should be analogous to stochastic updates. Though this analogy is qualitatively true and appealing, it is inexact in ways yet to be fully understood. Indeed, it is a widely held belief that quantum computers offer a strictly {\it more} powerful set of tools than classical computers, even probabilistic ones \cite{[Bernstein1993]}, though this appears difficult to prove \cite{[Aaronson2010]}. Notable examples of the power of quantum computation include exponential speedups for finding prime factors with Shor's algorithm \cite{[Shor1997]}, and square-root speedups for generic classes of search problems through Grover's algorithm \cite{[Grover1996]}. Unsurprisingly, there is a ongoing search for ever more problems amenable to quantum attack \cite{[Galindo2002],[Nielsen2004],[Jordan2013]}. 


For instance, the quantum rejection sampling algorithm for approximate inference was only developed quite recently \cite{[Ozols2012]}, alongside a proof, relativized by an oracle, of a square-root speedup in runtime over the classical algorithm. The algorithm, just like its classical counterpart, is an extremely general method of doing approximate inference, requiring preparation of a quantum pure state representing the joint distribution $P(\mathcal{Q},\mathcal{E})$ and amplitude amplification to amplify the part of the superposition with the correct evidence. Owing to its generality, the procedure assumes access to a state-preparation oracle $\hat{A}_P$, and the runtime is therefore measured by the query complexity \cite{[Grover2000]}, the number of times the oracle must be used. Unsurprisingly, such oracles may not be efficiently implemented in general, as the ability to prepare arbitrary states allows for witness generation to QMA-complete problems \cite{[Ozols2012],[Bookatz2012]}. This also corresponds consistently to the NP-hardness of classical sampling.

In this paper, we present an unrelativized ({\it i.e.} no oracle) square-root, quantum speedup to rejection sampling on a Bayesian network. Just as the graphical structure of a Bayesian network speeds up classical sampling, we find that the same structure allows us to construct the state-preparation oracle $\hat{A}_P$ efficiently. Specifically, quantum sampling from $P(\mathcal{Q}|\mathcal{E}=e)$ takes time $\mathcal{O}(n 2^m P(e)^{-1/2})$, compared with $O(nm P(e)^{-1})$ for classical sampling, where $m$ is the maximum indegree of the network. We exploit the structure of the Bayesian network to construct an efficient quantum circuit $\hat{A}_P$ composed of $\mathcal{O}(n 2^m)$ controlled-NOT gates and single-qubit rotations that generates the quantum state $\ket{\psi_P}$ representing the joint $P(\mathcal{Q},\mathcal{E})$. This state must then be evolved to $\ket{\mathcal{Q}}$ representing $P(\mathcal{Q}|\mathcal{E}=e)$, which can be done by performing amplitude amplification \cite{[Brassard2002]}, the source of our speedup and heart of quantum rejection sampling in general \cite{[Ozols2012]}. The desired sample is then obtained in a single measurement of $\ket{\mathcal{Q}}$.

We better define the problem of approximate inference with a review of Bayesian networks in section~\ref{Bayesian Networks}. We discuss the sensible encoding of a probability distribution in a quantum state axiomatically in section~\ref{Quantum State Encoding}. This is followed by an overview of amplitude amplification in section~\ref{Amplitude Amplification}. The quantum rejection sampling algorithm is given in section \ref{Quantum Rejection Sampling}. As our main result, we construct circuits for the state preparation operator in sections~\ref{Q-sample Preparation} and~\ref{Bayesian State Preparation} and circuits for the reflection operators for amplitude amplification in section~\ref{Phase Flip Operators}. The total time complexity of quantum rejection sampling in Bayesian networks is evaluated in section~\ref{Time Complexity}, and we present avenues for further work in section~\ref{Conclusion}.




\section{Bayesian Networks}
\label{Bayesian Networks}

A Bayesian network is a directed acyclic graph structure that represents a joint probability distribution over $n$ bits. A significant advantage of the Bayesian network representation is that the space complexity of the representation can be made much less than the general case by exploiting conditional dependencies in the distribution. This is achieved by associating with each graph node a conditional probability table for each random variable, with directed edges representing conditional dependencies, such as in Fig.~\ref{DAG1}a.

We adopt the standard convention of capital letters (e.g. $X$) representing random \emph{variables} while lowercase letters (e.g. $a$) are particular fixed \emph{values} of those variables. For simplicity, the random variables are taken to be binary. Accordingly, probability vectors are denoted $P(X)=\left\{P(X=0),P(X=1)\right\}$ while $P(x)\equiv P(X=x)$. Script letters represent a set of random variables $\mathcal{X}=\{X_{1},X_{2},...X_{n}\}$.

An arbitrary joint probability distribution $P(x_1,x_2,\dots,x_n)$ on $n$ bits can always be factored by recursive application of Bayes' rule $P(X,Y)=P(X)P(Y|X)$,
\begin{equation}\label{Bayes_decomp}
P(x_1,x_2,\dots,x_n)=P(x_1)\prod_{i=2}^nP(x_i|x_1,\dots,x_{i-1}).
\end{equation}
However, in most practical situations a given variable $X_i$ will be dependent on only a few of its predecessors' values, those we denote by $\operatorname{parents}(X_i)\subseteq\{x_1,\dots,x_{i-1}\}$ (see Fig.~\ref{DAG1}a). Therefore, the factorization above can be simplified to
\begin{equation}\label{Bayes_Joint}
P(x_1,x_2,\dots,x_n)=P(x_1)\prod_{i=2}^nP(x_i|\operatorname{parents}(X_i)).
\end{equation}
A Bayes net diagrammatically expresses this simplification, with a topological ordering on the nodes $X_1\preceq X_2\preceq\dots\preceq X_n$ in which parents are listed before their children. With a node $x_i$ in the Bayes net, the conditional probability factor $P(x_i=1|\operatorname{parents}(X_i))$ is stored as a table of $2^{m_i}$ values \cite{[Russell2003]} where $m_i$ is the number of parents of node $X_i$, also known as the indegree. Letting $m$ denote the largest $m_i$, the Bayes net data structure stores at most $\mathcal{O}(n2^m)$ probabilities, a significant improvement over the direct approach of storing $\mathcal{O}(2^n)$ probabilities \cite{[Russell2003]}.

\begin{figure*}
\begin{tabular}{cc}
\num\putindeepbox[7pt]{\includegraphics[width=0.24\textwidth]{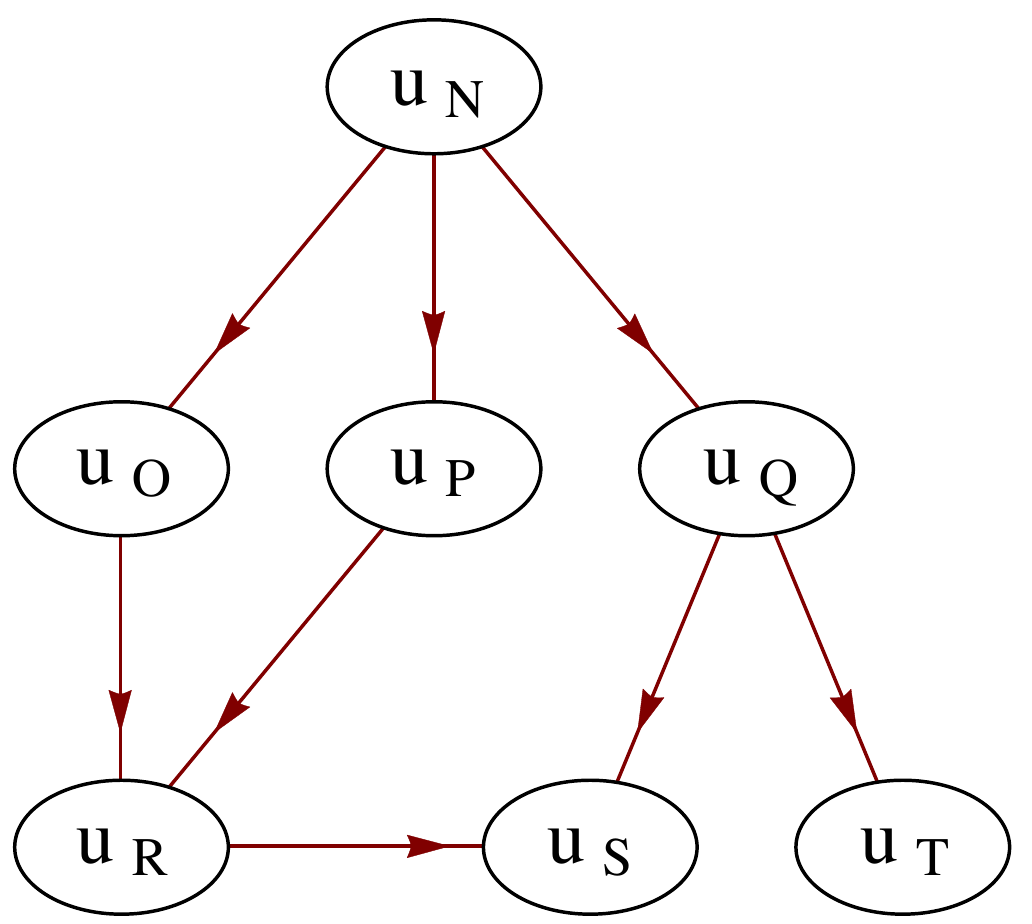}}
 & \num\putindeepbox[7pt]{\includegraphics[width=0.66\textwidth]{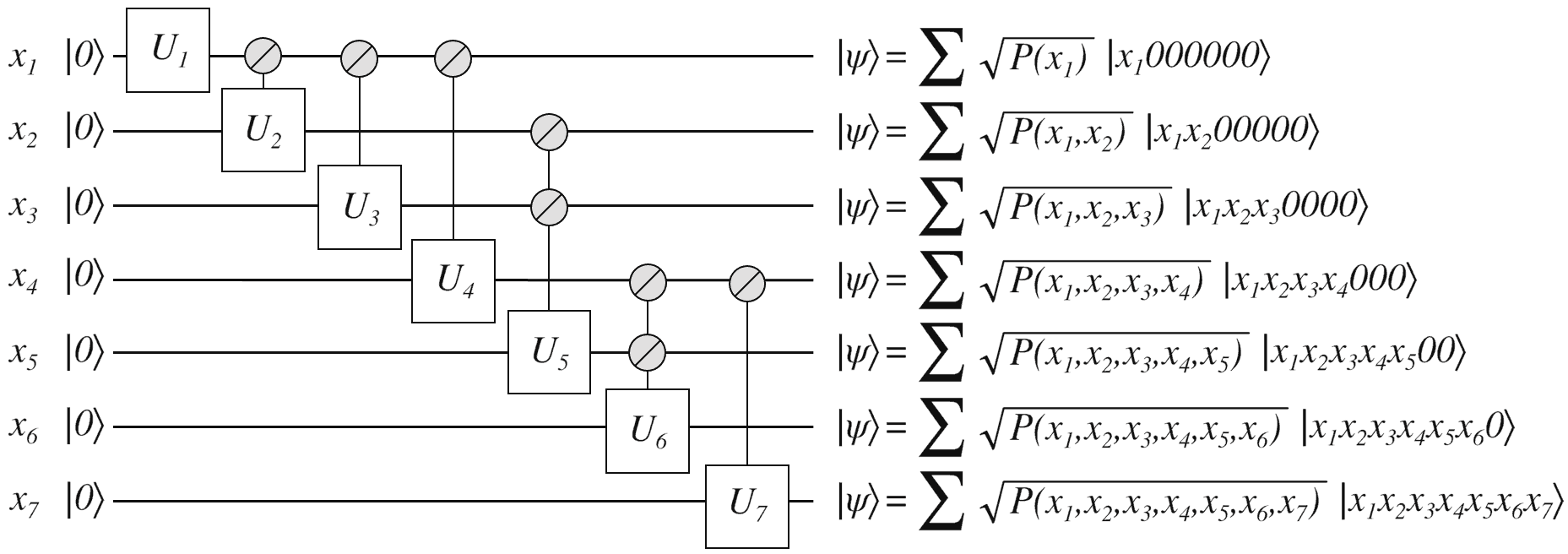}}
 \end{tabular}
\caption{\label{DAG1} a) An example of a directed acyclic graph which can represent a Bayesian network by associating with each node a conditional probability table. For instance, associated with the node $X_1$ is the one value $P(X_1=1)$, while that of $X_5$ consists of four values, the probabilities $X_5=1$ given each setting of the parent nodes, $X_2$ and $X_3$. b) A quantum circuit that efficiently prepares the q-sample representing the full joint distribution of (a). Notice in particular how the edges in the graph are mapped to conditioning nodes in the circuit. The $\ket{\psi_j}$ represent the state of the system after applying the operator sequence $U_1...U_j$ to the initial state $\ket{0000000}$.}
\end{figure*}

A common problem with any probability distribution is inference. Say we have a complete joint probability distribution on $n$-bits, $P(\mathcal{X})$. Given the values $e=e_{|\mathcal{E}|}...e_2 e_1$ for a set $\mathcal{E}\subseteq\mathcal{X}$ of random variables, the task is to find the distribution over a collection of query variables $\mathcal{Q}\subseteq\mathcal{X}\setminus\mathcal{E}$. That is, the exact inference problem is to calculate $P(\mathcal{Q}|\mathcal{E}=e)$. Exact inference is \#P-hard \cite{[Russell2003]}, since one can create a Bayes net encoding the $n$ variable $k$-SAT problem, with nodes for each variable, each clause, and the final verdict -- a count of the satisfying assignments.

Approximate inference on a Bayesian network is much simpler, thanks to the graphical structure. The procedure for sampling is as follows: working from the top of the network, generate a value for each node, given the values already generated for its parents. Since each node has at most $m$ parents that we must inspect before generating a value, and there are $n$ nodes in the tree, obtaining a sample $\{x_1,x_2\dots, x_n\}$ takes time $\mathcal{O}(nm)$. Yet we must postselect on the correct evidence values $\mathcal{E}=e$, leaving us with an average time per sample of $\mathcal{O}\left(nmP(e)^{-1}\right)$, which suffers when the probability $P(e)$ becomes small, typically exponentially small with the number of evidence variables $|\mathcal{E}|$. Quantum rejection sampling, however, will improve the factor of $P(e)^{-1}$ to $P(e)^{-1/2}$, while preserving the linear scaling in the number of variables $n$, given that we use an appropriate quantum state to represent the Bayesian network.



\section{Quantum Sampling from $P(\mathcal{X})$}
\label{Quantum State Encoding}
This section explores the analogy between quantum states and classical probability distributions from first principles. In particular, for a classical probability distribution function $P(\mathcal{X})$ on a set of $n$ binary random variables $\mathcal{X}$ what quantum state $\rho_P$ (possibly mixed, $d$-qubits) should we use to represent it? The suitable state, which we call a quantum probability distribution function (qpdf), is defined with three properties.
\begin{defn}
A qpdf for the probability distribution $P(\mathcal{X})$ has the following three properties:
\begin{enumerate}
\item Purity: In the interest of implementing quantum algorithms, we require the qpdf be a pure state $\rho_P=\ket{\Psi_P}\bra{\Psi_P}$.
\item Q-sampling: A single qpdf can be measured to obtain a classical $n$-bit string, a sample from $P(\mathcal{X})$. Furthermore, for any subset of variables $\mathcal{W}\subset\mathcal{X}$, a subset of qubits in the qpdf can be measured to obtain a sample from the marginal distribution $P(\mathcal{W})$. We call these measurement procedures q-sampling.
\item Q-stochasticity: For every stochastic matrix $T$ there is a unitary $U_T$ such that whenever $T$ maps the classical distribution $P(\mathcal{X})$ to $P'(\mathcal{X})$, $U_T$ maps the qpdf $\ket{\Psi_P}$ to $\ket{\Psi_{P'}}=U_T\ket{\Psi_P}$.
\end{enumerate}
\end{defn}
The motivation for property 3 is for implementing Markov chains, Markov decision processes, or even sampling algorithms such as Metropolis-Hastings, on quantum states. The question we pose, and leave open, is whether a qpdf exists.

The simplest way to satisfy the first two criteria, but not the third, is to initialize a single qubit for each classical binary random variable. This leads to what is called the q-sample, defined in prior work \cite{[Aharonov2003]} as:
\begin{defn}
\label{q-sample-def}
The q-sample of the joint distribution $P(x_1,...,x_n)$ over $n$ binary variables $\{X_i\}$ is the $n$-qubit pure state $\left|\psi_P\right>=\sum_{x_1,...,x_n}\sqrt{P(x_1,...,x_n)}\left|x_1...x_n\right>$.
\end{defn}
The q-sample possesses property 1 and the eponymous property 2 above. However, it does not allow for stochastic updates as per property 3, as a simple single qubit example shows. In that case, property 3 requires
\begin{equation}\label{SUC}
\left(\begin{array}{cc}U_{11}&U_{12}\\U_{21}&U_{22}\end{array}\right)\left(\begin{array}{c}\sqrt{p}\\\sqrt{1-p}\end{array}\right)=\left(\begin{array}{c}\sqrt{p T_{11}+(1-p)T_{12}}\\\sqrt{pT_{21}+(1-p)T_{22}}\end{array}\right),
\end{equation}
for all $p\in[0,1]$. Looking at Eq.~\ref{SUC} for $p=0$ and $p=1$ constrains $U$ completely, and it is never unitary when $T$ is stochastic. Thus, the q-sample fails to satisfy property 3.

Yet, the q-sample satisfies properties 1 and 2 in a very simplistic fashion, and various more complicated forms might be considered. For instance, relative phases could be added to the q-sample giving $\sum_x e^{i\phi(x)}\sqrt{P(x)}\ket{x}$, though this alone does not guarantee property 3, which is easily checked by adding phases to the proof above. Other extensions of the q-sample may include ancilla qubits, different measurement bases, or a post-processing step including classical randomness to translate the measurement result into a classical sample. It is an open question whether a more complicated representation satisfying all three properties exists, including q-stochasticity, so that we would have a qpdf possessing all the defining properties.

Nevertheless, although the q-sample is not a qpdf by our criteria, it will still be very useful for performing quantum rejection sampling. The property that a sample from a marginal distribution is obtained by simply measuring a subset of qubits means that, using conditional gates, we can form a q-sample for a conditional distribution from the q-sample for the full joint distribution, as we will show in section \ref{Circuit Constructions}. This corresponds to the classical formula $P(\mathcal{V}|\mathcal{W})=P(\mathcal{V},\mathcal{W})/P(\mathcal{W})$, which is the basis behind rejection sampling. The way it is actually done quickly on a q-sample is through amplitude amplification, reviewed next, in the general case.

\section{Amplitude Amplification}
\label{Amplitude Amplification}
Amplitude amplification \cite{[Brassard2002]} is a well-known extension of Grover's algorithm and is the second major concept in the quantum inference algorithm. Given a quantum circuit $\hat{A}$ for the creation of an $n$-qubit pure state $\ket{\psi}=\alpha\ket{\psi_t}+\beta\ket{\bar\psi_t}=\hat{A}\ket{0}^{\otimes n}$, where $\langle\psi_t|\bar\psi_t\rangle=0$, the goal is to return the target state $\ket{\psi_t}$ with high probability. To make our circuit constructions more explicit, we assume target states are marked by a known evidence bit string $e=e_{|\mathcal{E}|}...e_2e_1$, so that $\ket{\psi_t}=\ket{\mathcal{Q}}\ket{e}$ lives in the tensor product space $\mathcal{H}_\mathcal{Q}\otimes\mathcal{H}_\mathcal{E}$ and the goal is to extract $\ket{\mathcal{Q}}$.

Just like in Grover's algorithm, a pair of reflection operators are applied repetitively to rotate $\ket{\psi}$ into $\ket{\psi_t}$. Reflection about the evidence is performed by $\hat S_e=\hat I\otimes(\hat I-2\ket{e}\bra{e})$ followed by reflection about the initial state by $\hat S_\psi=(\hat I-2\ket{\psi}\bra{\psi})$. Given $\hat A$, then $\hat S_\psi=\hat A\hat S_0\hat A^\dagger$, where $\hat S_0=(\hat I-2\ket{0}\bra{0}^{\otimes n})$.

The analysis of the amplitude amplification algorithm is elucidated by writing the Grover iterate $\hat G=-\hat S_\psi\hat S_e=-\hat A\hat
S_0\hat A^\dagger\hat S_e$ in the basis of 
$\frac{\alpha}{|\alpha|}\ket{\psi_t}\equiv\left(\begin{smallmatrix}1\\0\end{smallmatrix}\right)$ and 
$\frac{\beta}{|\beta|}\ket{\bar\psi_t}\equiv\left(\begin{smallmatrix}0\\1\end{smallmatrix}\right)$ \cite{[Grover2000]},
\begin{equation}
\hat 
G=\left(\begin{array}{cc}1-2|\alpha|^2&2|\alpha|\sqrt{1-|\alpha|^2}\\-2|\alpha|\sqrt{1-|\alpha|^2}&1-2|\alpha|^2\end{array}\right).
\end{equation}
In this basis, the Grover iterate corresponds to a rotation by small 
angle $\theta=\cos^{-1}(1-2|\alpha|^2)\approx2|\alpha|$. Therefore, 
applying the iterate $N$ times rotates the state by $N\theta$. We conclude that 
$\hat G^N\ket{\psi}$ is closest to $\frac{\alpha}{|\alpha|}\ket{\psi_t}$ 
after $N=\mathcal{O}\left(\frac{\pi}{4|\alpha|}\right)$ iterations.

Usually, amplitude amplification needs to be used without knowing the value of $|\alpha|$. In that case, $N$ is not known. However, the situation is remedied by guessing the correct number of Grover iterates to apply in exponential progression. That is, we apply $\hat G$ $2^k$ times, with $k=0,1,2,\dots$, measure the evidence qubits $\ket{\mathcal{E}}$ after each attempt, and stop when we find $\mathcal{E}=e$. It has been shown \cite{[Brassard2002]} that this approach also requires on average $\mathcal{O}\left(\frac{1}{|\alpha|}\right)$ applications of $\hat G$.

\section{The Quantum Rejection Sampling Algorithm}
\label{Quantum Rejection Sampling}

The quantum rejection sampling algorithm \cite{[Ozols2012]}, which we review now, is an application of amplitude amplification on a q-sample. The general problem, as detailed in section \ref{Bayesian Networks}, is to sample from the $n$-bit distribution $P(\mathcal{Q}|\mathcal{E}=e)$. We assume that we have a circuit $\hat A_P$ that can prepare the q-sample $\ket{\psi_P}=\hat A_P\ket{0}^{\otimes n}$. Now, permuting qubits so the evidence lies to the right, the q-sample can be decomposed into a superposition of states with correct evidence and states with incorrect evidence.
\begin{equation}
\label{evid_decomp}
\ket{\psi_P}=\sqrt{P(e)}\ket{\mathcal{Q}}\ket{e}+\sqrt{1-P(e)}\ket{\overline{\mathcal{Q},e}},
\end{equation}
where $\ket{\mathcal{Q}}$ denotes the q-sample of $P(\mathcal{Q}|\mathcal{E}=e)$, our target state. Next perform the amplitude amplification algorithm from the last section to obtain $\ket{\mathcal{Q}}$ with high probability. Note that this means the state preparation operator $\hat A_P$ must be applied $\mathcal{O}(P(e)^{-1/2})$ times. Once obtained, $\ket{\mathcal{Q}}$ can be measured to get a sample from $P(\mathcal{Q}|\mathcal{E}=e)$, and we have therefore done approximate inference. Pseudocode is provided as an algorithm \ref{alg1}.

However, we are so far missing a crucial element. How is the q-sample preparation circuit $\hat A_P$ actually implemented, and can this implementation be made efficient, that is, polynomial in the number of qubits $n$? The answer to this question removes the image of $\hat A_P$ as a featureless black box and is addressed in the next section.

\begin{algorithm}                      
\caption{\label{alg1} Quantum rejection sampling algorithm: generate one sample from $P(\mathcal{Q}|\mathcal{E}=e)$ given a q-sample preparation circuit $\hat A_P$}          
\begin{algorithmic}
\State $k \leftarrow -1$
\While {evidence $\mathcal{E}\neq e$}
\State $k \leftarrow k+1$
\State $\ket{\psi_P} \leftarrow \hat A_P\ket{0}^{\otimes n}$ \space\//\//prepare a q-sample of $P(\mathcal{X})$
\State $\ket{\psi_P'} \leftarrow \hat G^{2^k}\ket{\psi_P}$ \space\//\//where $\hat G=-\hat{A}_P\hat{S}_0 \hat{A}_P^\dag \hat{S}_e$
\State Measure evidence qubits $\mathcal{E}$ of $\ket{\psi_P'}$
\EndWhile
\State Measure the query qubits to obtain a sample $\mathcal{Q}=q$
\end{algorithmic}                          
\end{algorithm}


\section{Circuit Constructions}
\label{Circuit Constructions}

While the rejection sampling algorithm from Section \ref{Quantum Rejection Sampling} is entirely general for any distribution $P(\mathcal{X})$, the complexity of q-sample preparation, in terms of the total number of CNOTs and single qubit rotations involved, is generally exponential in the number of qubits, $\mathcal{O}(2^n)$. We show this in section \ref{Q-sample Preparation}. The difficultly is not surprising, since arbitrary q-sample preparation encompasses witness generation to QMA-complete problems \cite{[Ozols2012],[Bookatz2012]}. However, there are cases in which the q-sample can be prepared efficiently \cite{[Aharonov2003]}. The main result of this paper is that, for probability distributions resulting from a Bayesian network $\mathcal{B}$ with $n$ nodes and maximum indegree $m$, the circuit complexity of the q-sample preparation circuit $\hat A_{\mathcal{B}}$ is $\mathcal{O}\left(n2^m\right)$. We show this in section \ref{Bayesian State Preparation}. The circuit constructions for the remaining parts of the Grover iterate, the phase flip operators, are given in section \ref{Phase Flip Operators}. Finally, we evaluate the complexity of our constructions as a whole in section \ref{Time Complexity} and find that approximate inference on Bayesian networks can be done with a polynomially sized quantum circuit.

Throughout this section we will denote the circuit complexity of a circuit $\hat C$ as $Q_{\hat C}$. This complexity measure is the count of the number of gates in $\hat C$ after compilation into a complete, primitive set. The primitive set we employ includes the CNOT gate and all single qubit rotations.

\subsection{Q-sample Preparation}
\label{Q-sample Preparation}

If $P(x)$ lacks any kind of structure, the difficulty of preparing the q-sample $\left|\psi_P\right>=\hat{A}_P\left|0\right>^{\otimes n}$ with some unitary $\hat{A}_P$  scales at least exponentially with the number of qubits $n$ in the q-sample. Since $P(x)$ contains $2^n-1$ arbitrary probabilities, $\hat A_P$ must contain at least that many primitive operations. In fact, the bound is tight --- we can construct a quantum circuit preparing $\ket{\psi_P}$ with complexity $\mathcal{O}(2^n)$.

\begin{thm}
\label{arbitrary_state_preparation}
Given an arbitrary joint probability distribution $P(x_1,...,x_n)$ over $n$ binary variables $\{X_i\}$, there exists a quantum circuit $\hat{A}_P$ that prepares the q-sample $\hat{A}_P\left|0\right>^{\otimes n}=\ket{\psi_P}=\sum_{x_1,...,x_n}\sqrt{P(x_1,...,x_n)}\left|x_1...x_n\right>$ with circuit complexity $\mathcal{O}(2^n)$.
\end{thm}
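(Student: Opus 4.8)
The plan is to build $\hat{A}_P$ directly from the Bayes factorization in Eq.~\ref{Bayes_decomp}, taking the square root so that $\sqrt{P(x_1,\dots,x_n)}=\sqrt{P(x_1)}\prod_{i=2}^n\sqrt{P(x_i|x_1,\dots,x_{i-1})}$. Reading this product from left to right suggests preparing the q-sample one qubit at a time: qubit $i$ is to be rotated out of $\ket{0}$ into the superposition $\sqrt{P(X_i=0|x_1,\dots,x_{i-1})}\ket{0}+\sqrt{P(X_i=1|x_1,\dots,x_{i-1})}\ket{1}$, where the rotation angle is allowed to depend on the values already set on qubits $1,\dots,i-1$.

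First I would fix the single-qubit building block to be the $y$-rotation $\hat R(\theta)$ that sends $\ket{0}\mapsto\cos(\theta/2)\ket{0}+\sin(\theta/2)\ket{1}$, and define the layer acting on qubit $i$ as a uniformly controlled rotation: for each of the $2^{i-1}$ bit strings $(x_1,\dots,x_{i-1})$ it applies $\hat R(\theta_{x_1\dots x_{i-1}})$ to qubit $i$ conditioned on qubits $1,\dots,i-1$ holding that string, with $\theta_{x_1\dots x_{i-1}}=2\arccos\sqrt{P(X_i=0|x_1,\dots,x_{i-1})}$. Applying these layers in order $i=1,\dots,n$ to $\ket{0}^{\otimes n}$ produces, by induction on $i$, the partial q-sample $\sum_{x_1,\dots,x_i}\sqrt{P(x_1,\dots,x_i)}\ket{x_1\dots x_i}\ket{0}^{\otimes(n-i)}$; the inductive step is precisely one application of Bayes' rule $P(x_1,\dots,x_i)=P(x_1,\dots,x_{i-1})P(x_i|x_1,\dots,x_{i-1})$, so at $i=n$ we recover $\ket{\psi_P}$. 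Correctness is therefore essentially immediate from the factorization, and the substance of the theorem lies entirely in the gate count.

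The main work is in bounding the circuit complexity, and this is where a naive implementation falls short of the claimed $\mathcal{O}(2^n)$. The layer on qubit $i$ specifies $2^{i-1}$ independent angles, so realizing it as $2^{i-1}$ separate multiply-controlled rotations, each costing $\mathcal{O}(i)$ CNOTs and single-qubit gates, would give $\sum_i\mathcal{O}(i\,2^{i-1})=\mathcal{O}(n2^n)$, short of the target by a factor of $n$. To hit $\mathcal{O}(2^n)$ I would instead invoke the standard uniformly-controlled (multiplexed) rotation circuit, which realizes all $2^{i-1}$ conditional angles on qubit $i$ using only $2^{i-1}$ single-qubit rotations interleaved with $2^{i-1}$ CNOTs. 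Granting this decomposition, the total count is $\sum_{i=1}^n\mathcal{O}(2^{i-1})=\mathcal{O}(2^n)$, which matches the parameter-counting lower bound of $2^n-1$ free probabilities noted just before the statement, confirming that the bound is tight. The main obstacle is thus not the construction but verifying that the multiplexed-rotation primitive achieves a gate count linear in the number of angles it encodes; once that efficient decomposition is in hand, summing the geometric series closes the proof.
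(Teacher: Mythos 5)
Your proposal is correct and takes essentially the same approach as the paper: correctness via the chain-rule factorization realized by a sequence of uniformly controlled $y$-rotations on successive qubits, and the $\mathcal{O}(2^n)$ gate count obtained by invoking the multiplexed-rotation decomposition that uses $\mathcal{O}(2^{i-1})$ CNOTs and single-qubit rotations per layer (the paper cites Bergholm et al.\ for precisely this primitive) and summing the geometric series. The only cosmetic difference is your angle parametrization $2\arccos\sqrt{P(X_i=0|x_c)}$ versus the paper's $2\tan^{-1}\bigl(\sqrt{P(x_i=1|x_c)/P(x_i=0|x_c)}\bigr)$, which are the same rotation.
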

\begin{proof}
Decompose ${P(x)=P(x_{ 1 })\prod _{ i=2 }^{ n } P(x_{ i }|x_{ 1 }...x_{ i-1 })}$ as per Eq.~\eqref{Bayes_decomp}. For each conditional distribution $P(X_{ i }|x_{ 1 }...x_{ i-1 })$, let us define the $i$-qubit uniformly controlled rotation $\hat{U}_i$ such that given an $(i-1)$ bit string assignment $x_c\equiv x_{ 1 }...x_{ i-1 }$ on the control qubits, the action of $\hat{U}_{i}$ on the $i^{\text{th}}$ qubit initialized to $\left|0\right>_{i}$ is a rotation about the y-axis by angle $2 \tan^{-1}(\sqrt{P(x_{i}=1|x_c)/P(x_{i}=0|x_c)})$ or $\hat{U}_{i}\left|0\right>_{i}=\sqrt{P(x_{i}=0|x_c)}\left|0\right>_{i}+\sqrt{P(x_{i}=1|x_c)}\left|1\right>_{i}$. With this definition, the action of the single-qubit $\hat{U}_{1}$ is $\hat{U}_{1}\left|0\right>_{1}=\sqrt{P(x_{1}=0)}\left|0\right>_{1}+\sqrt{P(x_{1}=1)}\left|1\right>_{1}$. By applying Bayes' rule in reverse, the operation $\hat{A}_P=\hat{U}_{n}...\hat{U}_{1}$ then produces $\left|\psi_P\right>=\hat{A}_P\left|0\right>$. As each $k$-qubit uniformly controlled rotation is decomposable into $\mathcal{O}(2^k)$ CNOTs and single-qubit rotations \cite{[Bergholm2005]}, the circuit complexity of $\hat{A}_P$ is $Q_{\hat{A}_P}=\sum_{i=1}^{n}\mathcal{O}(2^{i-1})=\mathcal{O}(2^n)$.
\end{proof}

The key quantum compiling result used in this proof is the construction of Bergholm et.~al.~\cite{[Bergholm2005]} that decomposes $k$-qubit uniformly controlled gates into $\mathcal{O}(2^k)$ CNOTs and single qubit operations. Each uniformly controlled gate is the realization of a conditional probability table from the factorization of the joint distribution. We use this result again in Bayesian q-sample preparation.

\subsection{Bayesian Q-sample Preparation}
\label{Bayesian State Preparation}
We now give our main result, a demonstration that the circuit $\hat A_\mathcal{B}$ that prepares the q-sample of a Bayesian network is exponentially simpler than the general q-sample preparation circuit $\hat A_P$. We begin with a Bayesian network with, as usual, $n$ nodes and maximum indegree $m$ that encodes a distribution $P(\mathcal{X})$. As a minor issue, because the Bayesian network may have nodes reordered, the indegree $m$ is actually a function of the specific parentage of nodes in the tree. This non-uniqueness of $m$ corresponds to the non-uniqueness of the decomposition $P(x_1,...,x_n)=P(x_{ 1 })\prod _{ i=2 }^{ n } P(x_{ i }|x_{ 1 }...x_{ i-1 })$ due to permutations of the variables. Finding the variable ordering minimizing $m$ is unfortunately an NP-hard problem \cite{[Chickering2004]}, but typically the variables have real-world meaning and the natural causal ordering often comes close to optimal \cite{[Druzdzel1993]}. In any case, we take $m$ as a constant much less than $n$.


\begin{defn}
\label{q-sample}
If $P(\mathcal{X})$ is the probability distribution represented by a Bayesian network $\mathcal{B}$, the Bayesian q-sample $\ket{\psi_{\mathcal{B}}}$ denotes the q-sample of $P(\mathcal{X})$.
\end{defn}

\begin{thm}
\label{bayesian_state_preparation}
The Bayesian q-sample of the Bayesian network $\mathcal{B}$ with $n$ nodes and bounded indegree $m$ can be prepared efficiently by an operator $\hat{A}_{\mathcal{B}}$ with circuit complexity $\mathcal{O}(n 2^m)$ acting on the initial state $\left|0\right>^{\otimes n}$.
\end{thm}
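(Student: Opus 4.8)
The plan is to adapt the proof of Theorem~\ref{arbitrary_state_preparation} almost verbatim, replacing the full Bayes decomposition of Eq.~\eqref{Bayes_decomp} with the sparser Bayesian factorization of Eq.~\eqref{Bayes_Joint}. The essential observation is that in the arbitrary case each rotation $\hat{U}_i$ had to be controlled by all $i-1$ predecessor qubits, incurring a cost of $\mathcal{O}(2^{i-1})$, whereas the conditional independence encoded in the network means $P(x_i|x_1\dots x_{i-1})=P(x_i|\operatorname{parents}(X_i))$, so $\hat{U}_i$ need only be controlled by the $m_i\leq m$ parent qubits.

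Concretely, I would first fix a topological ordering $X_1\preceq\dots\preceq X_n$ so that the parents of every node precede it. For each node $X_i$ I define a uniformly controlled $y$-rotation $\hat{U}_i$ acting on qubit $i$, controlled by the qubits in $\operatorname{parents}(X_i)$, whose action on a parent basis string $x_p$ is
\begin{equation*}
\hat{U}_i\ket{x_p}\ket{0}_i=\ket{x_p}\left(\sqrt{P(x_i=0|x_p)}\ket{0}_i+\sqrt{P(x_i=1|x_p)}\ket{1}_i\right),
\end{equation*}
exactly realizing the conditional probability table stored at that node; the root $\hat{U}_1$ is the special uncontrolled case realizing $P(x_1)$. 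The full preparation operator is then $\hat{A}_\mathcal{B}=\hat{U}_n\cdots\hat{U}_1$.

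The correctness argument is an induction on $i$: after applying $\hat{U}_1\cdots\hat{U}_i$ the state should be the q-sample of the marginal $P(x_1,\dots,x_i)$ on the first $i$ qubits tensored with $\ket{0}$ on the rest. The inductive step multiplies the amplitude $\sqrt{P(x_1,\dots,x_{i-1})}$ already present by the newly introduced factor $\sqrt{P(x_i|\operatorname{parents}(X_i))}$; because the topological ordering guarantees the parent qubits already carry the correct superposition, and because conditional independence gives $P(x_i|\operatorname{parents}(X_i))=P(x_i|x_1\dots x_{i-1})$, the product collapses to $\sqrt{P(x_1,\dots,x_i)}$ by Eq.~\eqref{Bayes_Joint}. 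This is precisely the point requiring care and is the main obstacle: I must confirm that conditioning on parents alone reproduces the same amplitudes as conditioning on all predecessors, which is the defining property of the Bayesian network rather than an additional assumption, and that restricting the control register to a noncontiguous parent set does not alter the uniformly controlled gate construction.

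Finally I would count gates. Invoking the decomposition of Bergholm et~al.~\cite{[Bergholm2005]}, each uniformly controlled rotation $\hat{U}_i$ on $m_i$ controls compiles into $\mathcal{O}(2^{m_i})$ CNOTs and single-qubit rotations. Summing over all $n$ nodes and using $m_i\leq m$ gives $Q_{\hat{A}_\mathcal{B}}=\sum_{i=1}^n\mathcal{O}(2^{m_i})\leq n\,\mathcal{O}(2^m)=\mathcal{O}(n2^m)$, as claimed; this last step is routine once the construction and its correctness are in place.
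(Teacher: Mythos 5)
Your proposal is correct and follows essentially the same route as the paper's proof: a topological ordering of the nodes, uniformly controlled rotations conditioned only on the parent qubits (justified by the network's conditional-independence structure), and the Bergholm et al.\ decomposition giving $\mathcal{O}(2^{m_i})$ gates per rotation, summing to $\mathcal{O}(n2^m)$. Your explicit induction on the partial q-samples is a more careful spelling-out of correctness than the paper provides, but it is the same construction and the same counting argument.
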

\begin{proof}
As a Bayesian network is a directed acyclic graph, let us order the node indices topologically such that for all $1\le i\le n$, we have $\text{parents}(x_i)\subseteq\{x_1,x_2,\dots,x_{i-1}\}$, and $\text{max}_i\left|\text{parents}(x_{i})\right|=m$. Referring to the construction from the proof of theorem~\ref{arbitrary_state_preparation}, the state preparation operator $\hat{A}=\hat{U}_{n}...\hat{U}_{1}$ then contains at most $m$-qubit uniformly controlled operators, each with circuit complexity $\mathcal{O}(2^m)$, again from Bergholm et. al. \cite{[Bergholm2005]}. The circuit complexity of $\hat{A}_{\mathcal{B}}$ is thus $Q_{\hat{A}_{\mathcal{B}}}=\sum_{i=1}^{n}\mathcal{O}(2^m)=\mathcal{O}(n2^m)$.
\end{proof}

Fig.~\ref{DAG1}b shows the circuit we have just described. Bayesian q-sample preparation forms part of the Grover iterate required for amplitude amplification. The rest is comprised of the reflection, or phase flip, operators.

\subsection{Phase Flip Operators}
\label{Phase Flip Operators}
Here we show that the phase flip operators are also efficiently implementable, so that we can complete the argument that amplitude amplification on a Bayesian q-sample is polynomial time. Note first that the phase flip operators $\hat S_e$ acting on $k=|\mathcal{E}|\le n$ qubits can be implemented with a single $k$-qubit controlled $Z$ operation along with at most $2k$ bit flips. The operator $\hat S_0$ is the special case $\hat S_{e=0^n}$. A Bayesian q-sample can be decomposed exactly as in Eq.~\eqref{evid_decomp}
\begin{equation}
\ket{\psi_\mathcal{B}}=\sqrt{P(e)}\ket{\mathcal{Q}}\ket{e}+\sqrt{1-P(e)}\ket{\overline{\mathcal{Q},e}}.
\end{equation}
Recall $\ket{\mathcal{Q}}$ is the q-sample of $P(\mathcal{Q}|\mathcal{E}=e)$ and $\ket{\overline{\mathcal{Q}e}}$ contains all states with invalid evidence $\mathcal{E}\neq e$. We write the evidence as a $k$-bit string $e=e_k\dots e_2e_1$ and $\hat X_i$ as the bit flip on the $i^{\text{th}}$ evidence qubit. The controlled phase, denoted $\hat Z_{1\dots k}$, acts on all $k$ evidence qubits symmetrically, flipping the phase if and only if all qubits are 1. Then $\hat S_e$ is implemented by
\begin{equation}
\hat S_e=\hat B\hat Z_{1\dots k} \hat B
\end{equation}
where $\hat B=\prod_{i=1}^k \hat X_i^{\bar e_i}$ with $\bar e_i\equiv1-e_i$. Explicitly,
\begin{align}
\hat S_e\ket{\psi_\mathcal{B}}&=\hat B\hat Z_{1\dots k} \hat B\left[\sqrt{P(e)}\ket{\mathcal{Q}}\ket{e}+\sqrt{1-P(e)}\ket{\overline{\mathcal{Q}e}}\right]\\\nonumber
&=\hat B\hat Z_{1\dots k}\left[\sqrt{P(e)}\ket{\mathcal{Q}}\ket{1^n}+\sqrt{1-P(e)}\ket{\overline{\mathcal{Q}1^n}}\right]\\\nonumber
&=\hat B\left[-\sqrt{P(e)}\ket{\mathcal{Q}}\ket{1^n}+\sqrt{1-P(e)}\ket{\overline{\mathcal{Q}1^n}}\right]\\\nonumber
&=\left[-\sqrt{P(e)}\ket{\mathcal{Q}}\ket{e}+\sqrt{1-P(e)}\ket{\overline{\mathcal{Q}e}}\right].
\end{align}

The circuit diagram representing $\hat S_e$ is shown in Fig.~\ref{phase_flip_circuit}. The $k$-qubit controlled phase can be constructed from $\mathcal{O}(k)$ CNOTs and single qubit operators using $\mathcal{O}(k)$ ancillas \cite{[Nielsen2004]} or, alternatively, $\mathcal{O}(k^2)$ CNOTs and single qubit operators using no ancillas \cite{[Saeedi2013]}.

\begin{figure}[H]
\includegraphics[width=\columnwidth]{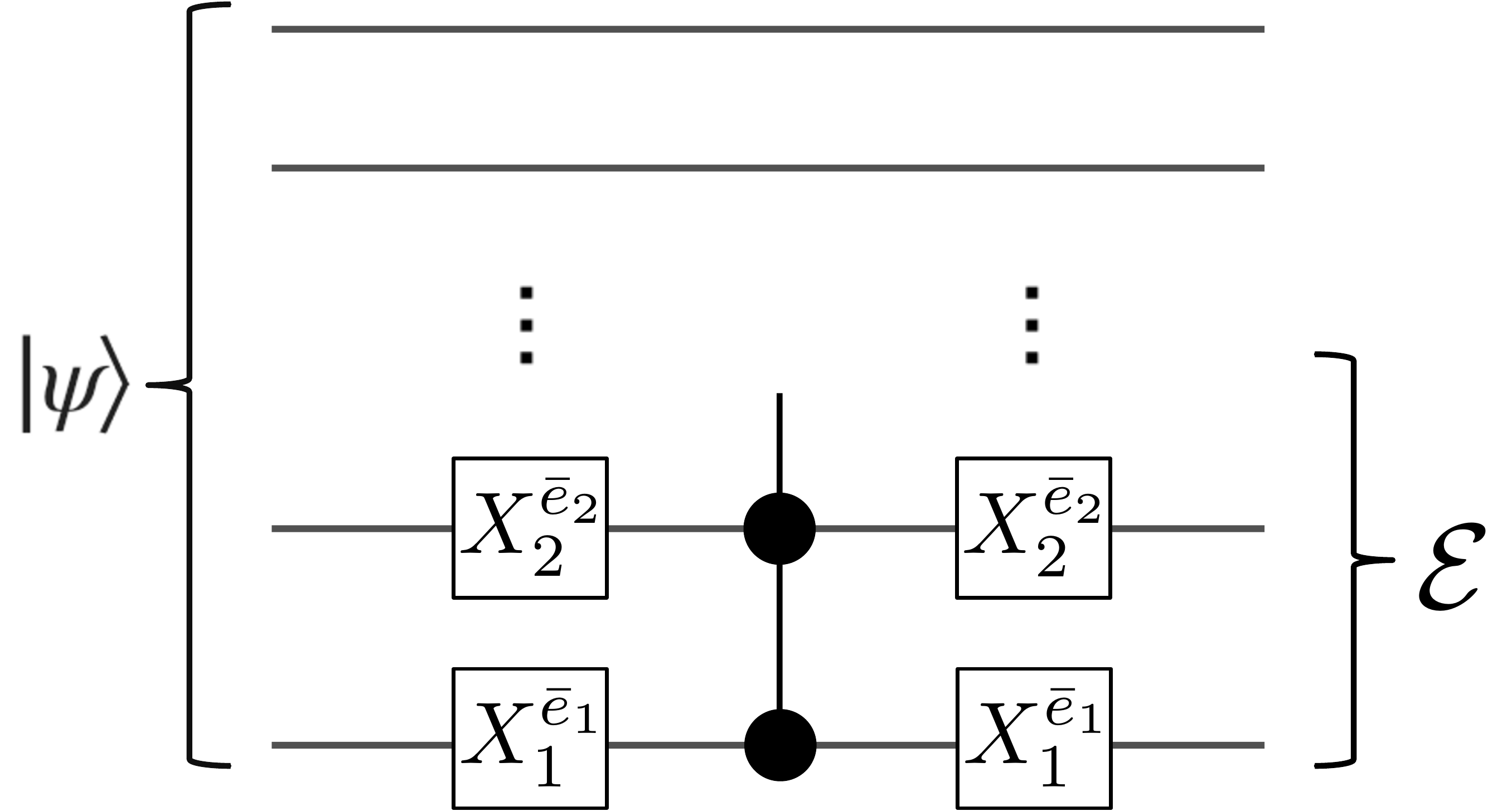}
\caption{\label{phase_flip_circuit}Quantum circuit for implementing the phase flip operator $\hat S_e$. The $k=|\mathcal{E}|$ qubit controlled phase operator acts on the evidence qubits $\mathcal{E}$. It may be compiled into $\mathcal{O}(k)$ CNOTs and single qubit operators given $\mathcal{O}(k)$ ancillas \cite{[Nielsen2004]}. The evidence values $e=e_k\dots e_2e_1$ control the bit flips through $\bar e_i\equiv1-e_i$.}
\end{figure}
\begin{table}[H]
\centering
\begin{tabular}{ l | c | r }
  $\hat{U}$ & $Q_{\hat{U}}$ & \text{Comments} \\ \hline
  $\hat{A}_P$ & $\mathcal{O}(2^n)$ & \text{Q-sample preparation} \\
  $\hat{A}_\mathcal{B}$ & $\mathcal{O}(n 2^m)$ & \text{Bayesian state preparation} \\
  $\hat{S}_0$ & $\mathcal{O}(n)$ & $\mathcal{O}(n)$ ancilla qubits \\
  $\hat{S}_e$ & $\mathcal{O}(|\mathcal{E}|)$ & $\mathcal{O}(|\mathcal{E}|)$ ancilla qubits
\end{tabular}
\caption{\label{Complexity_table} Circuit complexity $Q_{\hat{U}}$ of implementing the operators $\hat U$ discussed in the text. The Grover iterate $\hat G$ for amplitude amplification of a Bayesian q-sample (general q-sample) consists of two instances of the preparation circuit $\hat A_\mathcal{B}$ ($\hat A_P$) and one instance each of $\hat S_0$ and $\hat S_e$. The time to collect one sample from $P(\mathcal{Q}|\mathcal{E}=e)$ is $\mathcal{O}(Q_{\hat G}P(e)^{-1/2})$.}
\end{table}

\subsection{Time Complexity}
\label{Time Complexity}
The circuit complexities of the various elements in the Grover iterate $\hat{G}=-\hat{A}\hat{S}_0 \hat{A}^\dag \hat{S}_e$ are presented in Table~\ref{Complexity_table}. As the circuit complexity of the phase flip operator $S_{0}$ ($S_{e}$) scales linearly with number of qubits $n$ ($|\mathcal{E}|$), $Q_{\hat{G}}$ is dominated by the that of the state preparation operator $\hat A$. Although $Q_{\hat{A}_P}$ scales exponentially with the number of nodes $n$ for general q-sample preparation, Bayesian q-sample preparation on a network of bounded indegree $m$ is efficient. Namely, $Q_{\hat{A}_\mathcal{B}}=\mathcal{O}(n 2^m)$ scales linearly with $n$ as in classical sampling from a Bayesian network. It takes $\mathcal{O}(P(e)^{-1/2})$ applications of $\hat A_{\mathcal{B}}$ to perform the rejection sampling algorithm from Section \ref{Quantum Rejection Sampling} and, thus, a single sample from $P(\mathcal{Q}|\mathcal{E})$ can be obtained by a quantum computer in time $\mathcal{O}(n2^mP(e)^{-1/2})$. In Section \ref{Bayesian Networks}, we saw that classical Bayesian inference takes time $\mathcal{O}(nmP(e)^{-1})$ to generate a single sample. Thus, quantum inference on a Bayesian network provides a square-root speedup over the classical case. The quantum circuit diagram for Bayesian inference is outlined in Fig.~\ref{CompleteCircuit}.

\setcounter{eqn}{0}
\renewcommand{\arraystretch}{2.5}
\begin{figure}
\begin{tabular}{l}
\num\putindeepbox[7pt]{\includegraphics[width=0.9\columnwidth]{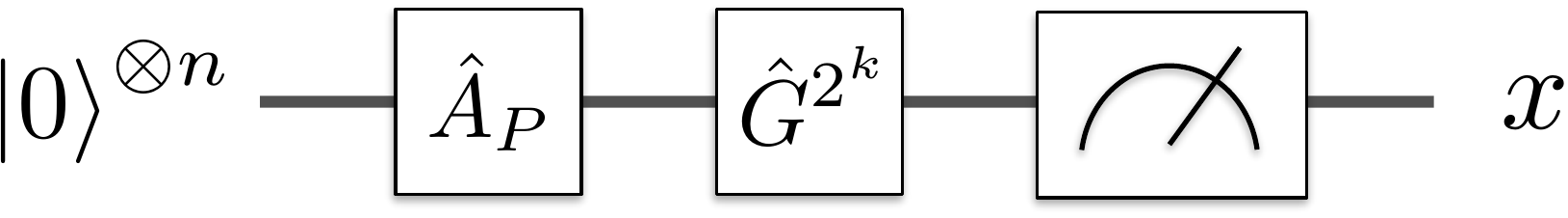}}\\
\num\putindeepbox[7pt]{\includegraphics[width=0.9\columnwidth]{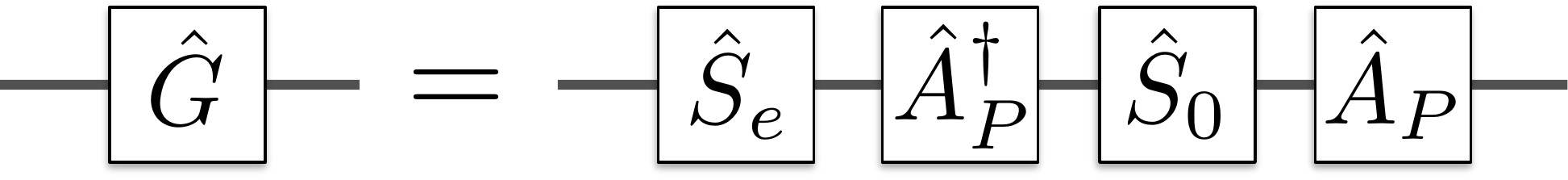}}
\end{tabular}
\captionsetup{justification=centerlast, singlelinecheck=false}
\caption{\label{CompleteCircuit} a) Quantum Bayesian inference on Bayes net $\mathcal{B}$ for evidence $\mathcal{E}=e$ is done by repetition of the circuit shown, with $k$ incrementing $k=0,1,\dots$, stopping when the measurement result $x$ contains evidence bits $e$. Then $x$ can be recorded as a sample from the conditional distribution $P(\mathcal{Q}|\mathcal{E})$. This corresponds to Algorithm \ref{alg1}. b) The constituents of the Grover iterate $\hat G$, the state preparation $\hat A_\mathcal{B}$ and phase flip operators $\hat S_e$ and $\hat S_0$. The state preparation operator is constructed from Theorem 2, and an example is shown in Fig.~\ref{DAG1}b. The phase flip operators are constructed as shown in Fig.~\ref{phase_flip_circuit}.\space\space\space\space\space\space\space\space\space\space\space\space\space\space\space\space\space\space\space\space\space\space\space\space\space\space}
\end{figure}

\section{Conclusion}
\label{Conclusion}
We have shown how the structure of a Bayesian network allows for a square-root, quantum speedup in approximate inference. We explicitly constructed a quantum circuit from CNOT and single qubit rotations that returns a sample from $P(\mathcal{Q}|\mathcal{E}=e)$ using just $\mathcal{O}(n2^mP(e)^{-\frac12})$ gates. For more general probability distributions, the Grover iterate would include a quantity of gates exponential in $n$, the number of random variables, and thus not be efficient. This efficiency of our algorithm implies experimental possibilities. As a proof of principle, one could experimentally perform inference on a two node Bayesian network with only two qubits with current capabilities of ion trap qubits \cite{[Hanneke2010]}.

We also placed the idea of a q-sample into the broader context of an analogy between quantum states and classical probability distributions. If a qpdf can be found that is pure, can be q-sampled, and allows q-stochastic updates, the quantum machine learning subfield would greatly benefit. Algorithms for many important routines, such as Metropolis-Hastings, Gibbs-sampling, and even Bayesian learning, could find square-root speedups in a similar manner to our results here.

Artificial intelligence and machine learning tasks are often at least NP-hard. Although exponential speedups on such problems are precluded by BBBV \cite{[Bennett1997]}, one might hope for square-root speedups, as we have found here, for a variety of tasks. For instance, a common machine learning environment is online or interactive, in which the agent must learn while making decisions. Good algorithms in this case must balance exploration, finding new knowledge, with exploitation, making the best of what is already known. The use of Grover's algorithm in reinforcement learning has been explored \cite{[Dong2008]}, but much remains to be investigated. One complication is that machine learning often takes place in a classical world; a robot is not usually allowed to execute a superposition of actions. One might instead focus on learning tasks that take place in a purely quantum setting. For instance, quantum error correcting codes implicitly gather information on what error occurred in order to correct it. Feeding this information back into the circuit, could create an adaptive, intelligent error correcting code.

\bibliographystyle{elsarticle-num}
\bibliography{QIReferencesMin}

\begin{thebibliography}{10}
\expandafter\ifx\csname url\endcsname\relax
  \def\url#1{\texttt{#1}}\fi
\expandafter\ifx\csname urlprefix\endcsname\relax\def\urlprefix{URL }\fi
\expandafter\ifx\csname href\endcsname\relax
  \def\href#1#2{#2} \def\path#1{#1}\fi

\bibitem{[Russell2003]}
S.~J. Russell, P.~Norvig, Artificial Intelligence: A Modern Approach, 2nd
  Edition, Pearson Education, 2003.

\bibitem{[Bensi2013]}
M.~Bensi, A.~D. Kiureghian, D.~Straub, Efficient bayesian network modeling of
  systems, Reliability Engineering and System Safety 112~(0) (2013) 200 -- 213.

\bibitem{[Neapolitan2004]}
R.~E. Neapolitan, {Learning Bayesian Networks}, illustrated edition Edition,
  Prentice Hall, 2003.

\bibitem{[Cooper1992]}
G.~Cooper, E.~Herskovits, A bayesian method for the induction of probabilistic
  networks from data 9~(4) (1992) 309--347.

\bibitem{[Friedman1999]}
N.~Friedman, M.~Goldszmidt, A.~Wyner, Data analysis with bayesian networks: A
  bootstrap approach (1999).

\bibitem{[Jensen2001]}
F.~V. Jensen, {Bayesian Networks and Decision Graphs (Information Science and
  Statistics)}, Springer, 2001.

\bibitem{[Metropolis1953]}
N.~Metropolis, A.~W. Rosenbluth, M.~N. Rosenbluth, A.~H. Teller, E.~Teller,
  Equation of state calculations by fast computing machines, The Journal of
  Chemical Physics 21~(6) (1953) 1087--1092.

\bibitem{[Chib1995]}
S.~Chib, E.~Greenberg, {Understanding the Metropolis-Hastings Algorithm}, The
  American Statistician 49~(4) (1995) 327--335.

\bibitem{[Dagum1993]}
P.~Dagum, M.~Luby, Approximating probabilistic inference in bayesian belief
  networks is np-hard, Artificial Intelligence 60~(1) (1993) 141 -- 153.

\bibitem{[Mansinghka2009]}
V.~K. Mansinghka, {Natively Probabilistic Computation}, Ph.D. thesis, MIT
  (2009).

\bibitem{[Bernstein1993]}
E.~Bernstein, U.~Vazirani, Quantum complexity theory, in: in Proc. 25th Annual
  ACM Symposium on Theory of Computing, ACM, 1993, pp. 11--20.

\bibitem{[Aaronson2010]}
S.~Aaronson, Bqp and the polynomial hierarchy, in: Proceedings of the
  Forty-second ACM Symposium on Theory of Computing, STOC '10, ACM, New York,
  NY, USA, 2010, pp. 141--150.

\bibitem{[Shor1997]}
P.~W. Shor, Polynomial-time algorithms for prime factorization and discrete
  logarithms on a quantum computer, SIAM J. on Computing (1997) 1484--1509.

\bibitem{[Grover1996]}
L.~K. Grover, A fast quantum mechanical algorithm for database search, in:
  ANNUAL ACM SYMPOSIUM ON THEORY OF COMPUTING, ACM, 1996, pp. 212--219.

\bibitem{[Galindo2002]}
A.~Galindo, M.~A. Martin-Delgado, Information and computation: Classical and
  quantum aspects, Rev. Mod. Phys. 74 (2002) 347--423.

\bibitem{[Nielsen2004]}
M.~A. Nielsen, I.~L. Chuang, {Quantum Computation and Quantum Information}, 1st
  Edition, Cambridge University Press, 2004.

\bibitem{[Jordan2013]}
S.~Jordan, Quantum algorithm zoo, http://math.nist.gov/quantum/zoo/.

\bibitem{[Ozols2012]}
M.~Ozols, M.~Roetteler, J.~Roland, Quantum rejection sampling, in: Proceedings
  of the 3rd Innovations in Theoretical Computer Science Conference, ITCS '12,
  ACM, New York, NY, USA, 2012, pp. 290--308.

\bibitem{[Grover2000]}
L.~Grover, Rapid sampling through quantum computing, STOC '00.

\bibitem{[Bookatz2012]}
A.~Bookatz, Qma-complete problems arXiv:1212.6312.

\bibitem{[Brassard2002]}
G.~Brassard, P.~H{\o}yer, M.~Mosca, Quantum amplitude amplification and
  estimation, 2002, pp. 53--74.

\bibitem{[Aharonov2003]}
D.~Aharonov, A.~Ta-Shma, Adiabatic quantum state generation and statistical
  zero knowledge, in: Proceedings of the thirty-fifth annual ACM symposium on
  Theory of computing, STOC '03, ACM, New York, NY, USA, 2003, pp. 20--29.

\bibitem{[Bergholm2005]}
V.~Bergholm, J.~J. Vartiainen, M.~M\"ott\"onen, M.~M. Salomaa, Quantum circuits
  with uniformly controlled one-qubit gates, Phys. Rev. A 71 (2005) 052330.

\bibitem{[Chickering2004]}
D.~M. Chickering, D.~Heckerman, C.~Meek, Large-sample learning of bayesian
  networks is np-hard, J. Mach. Learn. Res. 5 (2004) 1287--1330.

\bibitem{[Druzdzel1993]}
M.~Druzdzel, H.~Simon, Causality in bayesian belief networks, in: In
  Proceedings of the Ninth Annual Conference on Uncertainty in Artificial
  Intelligence (UAI--93, Morgan Kaufmann Publishers, Inc, 1993, pp. 3--11.

\bibitem{[Saeedi2013]}
M.~Saeedi, M.~Pedram, Linear-depth quantum circuits for n-qubit toffoli gates
  with no ancilla, arXiv preprint arXiv:1303.3557.

\bibitem{[Hanneke2010]}
D.~Hanneke, J.~P. Home, J.~D. Jost, J.~M. Amini, D.~Leibfried, D.~J. Wineland,
  Realization of a programmable two-qubit quantum processor, Nat Phys 6~(1)
  (2010) 13--16.

\bibitem{[Bennett1997]}
C.~H. Bennett, E.~Bernstein, G.~Brassard, U.~Vazirani, Strengths and weaknesses
  of quantum computing, SIAM JOURNAL OF COMPUTATION (1997) 1510--1523.

\bibitem{[Dong2008]}
D.~Dong, C.~Chen, H.~Li, T.~Tarn, Quantum reinforcement learning, Systems, Man,
  and Cybernetics, Part B: Cybernetics, IEEE Transactions on 38~(5) (2008)
  1207--1220.

\end{thebibliography}







\end{document}